\documentclass[final,3p,times]{elsarticle}
\usepackage[utf8]{inputenc}
\usepackage{microtype} 
\usepackage{algorithm}
\usepackage{algpseudocode}
\usepackage {amsmath}
\usepackage {xspace}
\usepackage {comment}
\usepackage {fancybox}
\usepackage {ifthen}
\usepackage{cleveref}
\usepackage {amsfonts,amssymb,bbold}

\usepackage[most]{tcolorbox}
\usepackage{amssymb}
\usepackage{amsthm}
\usepackage{color}
\usepackage{mdframed}
\usepackage{lipsum}
\usepackage{efbox,graphicx}
\efboxsetup{linecolor=green,linewidth=10pt}
\usepackage[compatibility=false]{caption}
\usepackage{mwe}

 





\DeclareMathOperator{\colvis}{ColVis}
\DeclareMathOperator{\vorvis}{VorVis}
\DeclareMathOperator{\kvorvis}{VorVis_k}
\DeclareMathOperator{\vis}{Vis}
\newcommand{\vorviewshed}[3][\T]{\ensuremath{\mathcal{W}_{#1}(#2,#3)}\xspace}
\newcommand{\mkmcal}[1]{\ensuremath{\mathcal{#1}}\xspace}

\newcommand{\T}{\mkmcal{T}}
\newcommand{\A}{\mkmcal{P}}

\newtheorem{lemma}{Lemma}
\newtheorem{theorem}{Theorem}

\newtheorem{example}{Example}
\efboxsetup{linecolor=green,linewidth=10pt}




\begin{document}

\begin{frontmatter}






\title{On Voronoi visibility maps of 1.5D terrains\\ with multiple viewpoints\tnoteref{t1}}
\tnotetext[t1]{Supported by
the Czech Science Foundation, grant number GJ19-06792Y, and with institutional support RVO:67985807. V.K was also partially supported by Charles University project UNCE/SCI/004 and by the Czech Academy of Sciences (Praemium Academiae awarded to M. Palu\v{s}).} 

\author[1,2]{Vahideh Keikha}
\ead{keikha@cs.cas.cz}
\author[1,3]{Maria Saumell\corref{cor1}}
\ead{maria.saumell@fit.cvut.cz}
\cortext[cor1]{Corresponding author}
\address[1]{The Czech Academy of Sciences, Institute of Computer Science, Czech Republic.
}
\address[2]{Faculty of Mathematics and Physics, Department of Applied Mathematics, Charles University, Prague, Czech Republic.}
\address[3]{Department of Theoretical Computer Science, Faculty of Information Technology, Czech Technical University in Prague, Czech Republic.}

\begin{abstract}
Given an $n$-vertex 1.5D terrain $\T$ and a set $\A$ of $m<n$ viewpoints, the Voronoi visibility map $\vorvis(\T,\A)$ is a partitioning of $\T$ into regions such that each region is assigned to the closest (in Euclidean distance) visible viewpoint. The colored visibility map $\colvis(\T,\A)$ is a partitioning of $\T$ into  regions that have the same set of visible viewpoints. In this paper, we propose an algorithm to compute $\vorvis(\T,\A)$ that runs in $O(n+(m^2+k_c)\log n)$ time, where $k_c$ and $k_v$ denote the total complexity of $\colvis(\T,\A)$ and $\vorvis(\T,\A)$, respectively. This improves upon a previous algorithm for this problem. We also generalize our algorithm to higher order Voronoi visibility maps, and to Voronoi visibility maps with respect to other distances. Finally, we prove bounds relating $k_v$ to $k_c$, and we show an application of our algorithm to a problem on limited range of sight.
\end{abstract}



\begin{keyword}
Visibility \sep  1.5D terrains \sep Voronoi diagrams \sep multiple viewpoints.



\end{keyword}

\end{frontmatter}


\section{Introduction} 
A 1.5D terrain $\T$ is an $x$-monotone polygonal chain of $n$ vertices in $\mathbb{R}^2$. Two points on $\T$ are \emph{visible} if the segment connecting them does not contain any
point strictly below $\T$.  

Visibility problems in terrains are fundamental in geographical information science and have many applications, such as placing fireguard or telecommunication towers~\cite{crsar-fpp-07}, identifying areas that are not visible from sensitive sites~\cite{m-cwpl-06}, or solving problems related to sensor networks~\cite{ymg-wsns-08}. 
Although 2.5D terrains are more interesting for modelling and forecasting, 1.5D terrains are easier
to visualize and to analyze. They give insights into the difficulties of 2.5D terrains in terrain analysis, and their proper understanding is seen as an essential step towards the ultimate goal of settling the 2.5D case. For this reason, visibility problems in 1.5D terrains have been intensively studied by the computational geometry community during the last 15 years.

In this paper, we focus on the variant where a set 
$\A$ of $m<n$ viewpoints are located on vertices of \T (we refer to the end of this section for a discussion on the assumption $m<n$). For each viewpoint $p\in \A$, the \emph{viewshed} of $p$ is the set of points of $\T$ that are visible from $p$  (see Fig.~\ref{fig:viewshed} for an example). Our goal is to efficiently extract information about the visibility of $\T$ with respect to $\A$. We continue the work initiated in~\cite{ter-vis2014}, where the following structures are introduced.

\begin{figure}[h]
    \centering
   \includegraphics[scale=1]{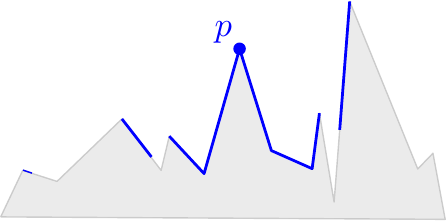}
    \caption{The viewshed of $p$.}
    \label{fig:viewshed}
\end{figure}



\begin{figure*}[h]
    \centering
   \includegraphics[scale=1]{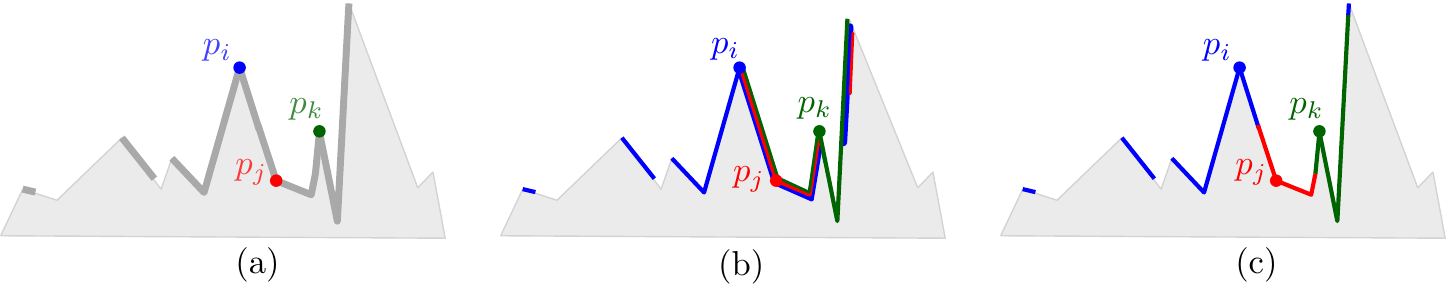}
    \caption{(a) $\vis(\T,\A)$ (the visible region is shown in gray). (b) $\colvis (\T,\A)$. (c)  $\vorvis (\T,\A)$.}
    \label{fig:examples-maps}
\end{figure*}

The {\em visibility map} $\vis(\T,\A)$ is a partitioning of $\T$ into a {\em visible} region (containing all portions of $\T$ that are visible by at least one element in $\A$) and an {\em invisible} region (containing the portions that are not visible by any element in $\A$). See Fig.~\ref{fig:examples-maps}a for an example. The visible region of the visibility map is equal to the union of the viewsheds of all viewpoints in $\A$. 

The {\em colored visibility map} $\colvis(\T,\A)$ is a partitioning of $\T$ into  regions that have the same set of visible viewpoints. See Fig.~\ref{fig:examples-maps}b for an example.

Finally, the {\em Voronoi visibility map} $\vorvis(\T,\A)$ is a partitioning of $\T$ into  regions that have the same closest visible viewpoint, where the distance used is the Euclidean distance (not the distance along the terrain). See Fig.~\ref{fig:examples-maps}c for an example.

 



Algorithms to compute these structures for both 1.5D and 2.5D terrains are proposed in~\cite{ter-vis2014}. The algorithm to obtain $\vorvis(\T,\A)$ of a 1.5D terrain runs in $O(n + (m^2 + k_c) \log n + k_v(m + \log n \log m))$  time, where $k_c$ and $k_v$ denote the total complexity of $\colvis(\T,\A)$ and $\vorvis(\T,\A)$, respectively. Both $k_c$ and $k_v$ have size $O(mn)$, and this bound is asymptotically tight~\cite{ter-vis2014}. The algorithm first computes $\colvis(\T,\A)$, and then it spends $\Theta(m)$ time to find each single region of $\vorvis(\T,\A)$. In this paper, we show that $\vorvis(\T,\A)$ can be extracted from $\colvis(\T,\A)$ in a 1.5D terrain 
much more efficiently, 
resulting in an $O(n+(m^2+k_c)\log n)$-time algorithm. We use an observation related to intersections of the terrain with bisectors of pairs of viewpoints that also allows us to prove a relationship between $k_c$ and $k_v$.

Let us point out that, apart from the mentioned output-sensitive algorithm for $\vorvis(\T,\A)$ of a 1.5D terrain, the authors of~\cite{ter-vis2014} also propose a divide-and-conquer algorithm running in $O(mn\log m)$ time, which is worst-case nearly optimal (recall that the maximum complexity of $\vorvis(\T,\A)$ is $\Theta(mn)$). Therefore, our new algorithm does not represent an improvement in the worst-case instances, but in instances where the original output-sensitive algorithm is faster than the divide-and-conquer one, and $k_vm$ is the dominant term in the running time. An example of such an instance is $m=\Theta(\sqrt{n})$, $k_c=\Theta(n^{3/4})$ and $k_v=\Theta(n^{3/4})$.

In this paper, we also provide generalizations of our algorithm to compute Voronoi visibility maps of higher order (that is, containing the information about the $k$ closest visible viewpoints, for some $k>1$), and Voronoi visibility maps with respect to two other distances: the Euclidean distance along the terrain and the link distance. All of these generalizations have the same running time as the original algorithm.

Finally, the new algorithm for $\vorvis(\T,\A)$ also allows us to solve efficiently a problem related to limited range of sight. These problems are motivated by the fact that, even though many visibility problems assume an infinite range of visibility, the intensity of light, signals and other phenomena modelled with viewpoints decreases over distance in realistic environments. 
In this spirit, the problem of illuminating a polygonal area with the minimum total energy was introduced by O'Rourke~\cite{ORourke}, and studied in~\cite{eisenbrand2008energy,ernestus2017algorithms}. 
We consider a related problem on terrains, namely, computing the minimum value $r^*$ such that, if the viewpoints can only see objects within distance $r^*$, the obtained visibility map is the same as $\vis(\T,\A)$. 
We show that this problem can also be solved in $O(n+(m^2+k_c)\log n)$ time.

\paragraph{Related Work} 
When there is only one viewpoint, computing the visibility map of a 1.5D terrain can be done in $O(n)$ time by converting the terrain into a simple polygon and applying the algorithm from~\cite{joe1987corrections}. 
One of the first results on the variant with more than one viewpoint is an $O((n+m) \log m)$ time algorithm to detect if there are any visible pairs of
viewpoints above a 1.5D terrain~\cite{ben2004computing}. Later, a systematic study
of $\vis(\T,\A)$, $\vorvis(\T,\A)$ and $\colvis(\T,\A)$ was carried out in~\cite{ter-vis2014} for both 1.5D and 2.5D terrains. A problem that is very related to the construction of $\vis(\T,\A)$ is that of computing the total visibility index of the terrain, that is, the number of viewpoints
that are visible from each of the viewpoints. This problem can be solved in $O(n\log^2 n)$ time~\cite{afshani2018efficient}.


The situation where the locations of the viewpoints are unknown has been thoroughly studied. 
It is well-known that computing the minimum number of viewpoints to keep a 1.5D terrain illuminated is NP-complete~\cite{friedrichs2015continuous,king2011terrain}, but the problem admits a PTAS~\cite{friedrichs2015continuous,friedrichs2014ptas,gibson2009approximation}. If the viewpoints are restricted to lie on a line, the same problem can be solved in linear time~\cite{daescu2019altitude}. 

\begin{figure*}[t]
    \centering
    \includegraphics[scale=1]{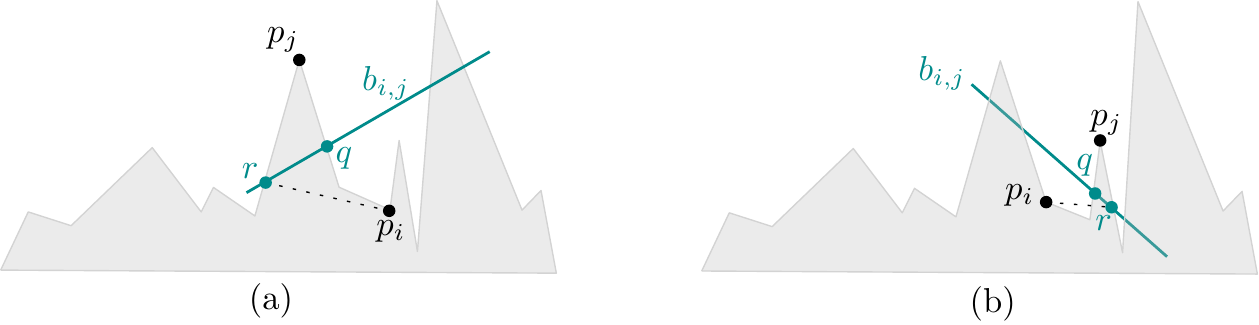}
    \caption{Illustration of Lemma~\ref{lem:key}: (a) $x(p_i)>x(p_j)$; (b) $x(p_i)<x(p_j)$. }
    \label{fig:bisectorij}
\end{figure*}

\paragraph{Assumptions} As in~\cite{ter-vis2014}, we assume that no three vertices of $\T$ are aligned. For the sake of simplicity, we also assume that no edge of $\T$ is contained in the bisector of two viewpoints in $\A$, and that no point on $\T$ is at the same distance from three or more viewpoints in $\A$.

As mentioned earlier, we restrict to the case where the viewpoints lie on terrain vertices; the same assumption is made in~\cite{ter-vis2014}, and it has the implication that $m \leq n$. Notice that no generality is lost because, if viewpoints are located in the interior of terrain edges, we can simply add vertices to the terrain and apply our algorithms. Furthermore, placing a superlinear number of viewpoints on the terrain does not seem to make much sense: If more than two viewpoints lie on the same edge, it is easy to see that the union of the viewsheds of the leftmost and rightmost viewpoints contains the viewshed of any other viewpoint on the edge. Therefore, all the intermediate viewpoints are somewhat irrelevant for visibility purposes. 

Finally let us mention that, in~\cite{ter-vis2014}, $k_v$ and $k_c$ do not only include the number of points of $\T$ that are on the boundary of two distinct regions of the respective diagrams, but also the total number of vertices of $\T$, that is, $n$. For the sake of consistency, we follow the same convention in this paper.

\section{Complexity of the Voronoi visibility map}

In~\cite{ter-vis2014}, it is stated that the complexity of $\vorvis(\T,\A)$ can be higher than, lower than,
or equal to that of $\colvis (\T,\A)$. In this section, we refine this statement. Recall that in both cases the complexity is $O(mn)$, and this bound is asymptotically tight~\cite{ter-vis2014}.

Let us introduce some terminology. The \emph{Voronoi viewshed}
\vorviewshed{p}{\A} of $p$ is the set of points in the viewshed of
$p$ that are closer to $p$ than to any other viewpoint that is visible from them. 

 Since we have assumed that no edge of $\T$ is contained in the bisector of two viewpoints, the shared boundary between two consecutive regions of $\vorvis (\T,\A)$ is always a single point of $\T$. We call such points \emph{event} points of $\vorvis (\T,\A)$. \emph{Event} points of $\colvis (\T,\A)$ are defined analogously, that is, as points on the boundary of two consecutive regions of the map.

We denote by $b_{i,j}$ the perpendicular bisector of two viewpoints $p_i,p_j$. Additionally, we denote by $q_{i,j}$ an event point of $\vorvis (\T,\A)$ such that a point infinitesimally to the left and right of $q_{i,j}$ belongs to $\vorviewshed{p_i}{\A}$ and $\vorviewshed{p_j}{\A}$, respectively (notice that an event $q_{i,j}$ is different from an event $q_{j,i}$). There are three (not mutually exclusive) possibilities: (i) $p_i$ becomes invisible at $q_{i,j}$\footnote{When we write that $p_i$ becomes invisible at $q_{i,j}$, we mean that it is visible immediately to the left of $q_{i,j}$ and invisible immediately to its right. We use the same rational when we write that $p_j$ becomes visible at $q_{i,j}$.}; (ii) $p_j$ becomes visible at $q_{i,j}$; (iii) $p_i$ and $p_j$ are visible at $q_{i,j}$, and $q_{i,j}$ is an intersection point between $b_{i,j}$ and $ \T$.


In the following lemma, we prove the key observation of this paper: Even though a bisector $b_{i,j}$ might intersect the terrain $\Theta(n)$ times, only two such intersections are relevant and might produce events of type (iii).

\begin{lemma} \label{lem:key}
Let $p_i\in \A$ be lower\footnote{We say that $p$ is \emph{lower} (respectively, \emph{higher}) than $q$ when it has a smaller (respectively, greater) $y$-coordinate than that of $q$.} than $p_j\in \A$. Let $q$ be an intersection point between $b_{i,j}$ and $ \T$ to the left (respectively, right) of $p_i$. Then any point to the left (respectively, right) of $q$ that is visible from $p_i$ is closer to $p_j$ than to $p_i$. Hence, there is no event $q_{i,j}$ or $q_{j,i}$ of type (iii) that lies to the left (respectively, right) of~$q$.
\end{lemma}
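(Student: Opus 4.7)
The plan is to argue by contradiction, reducing the claim to the familiar obstruction that a visibility segment from $p_i$ cannot pass strictly below a point of $\T$. By mirror symmetry it will suffice to handle the case $x(p_i)>x(p_j)$ (as in Fig.~\ref{fig:bisectorij}(a)), where $q$ lies to the left of $p_i$; I will suppose for contradiction that some point $x'\in\T$ strictly to the left of $q$ is visible from $p_i$ but not strictly closer to $p_j$, i.e.\ $x'$ lies in the closed halfplane $H_i$ bounded by $b_{i,j}$ and containing $p_i$.

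The first substantive step is to translate ``closer to $p_i$'' into the purely vertical statement ``weakly below the bisector line''. Setting $f(p)=\|p-p_i\|^2-\|p-p_j\|^2$, the map $f$ is affine with gradient $2(p_j-p_i)$, whose $y$-component is strictly positive because $p_i$ is lower than $p_j$. Hence, at every fixed $x$-coordinate, $H_i=\{f\le 0\}$ coincides with the set of points weakly below $b_{i,j}$. Both $p_i$ and $x'$ therefore lie weakly below $b_{i,j}$, and in fact $p_i$ lies strictly below it because $f(p_i)=-\|p_i-p_j\|^2<0$.

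The second step will be a short convexity argument. For the unique $t\in(0,1)$ such that the point $s=(1-t)p_i+tx'$ of the segment has abscissa $x(q)$, linearity of $f$ gives $f(s)=(1-t)f(p_i)+tf(x')<0$, so $s$ lies strictly below $b_{i,j}$ at abscissa $x(q)$, and therefore strictly below the point $q\in b_{i,j}\cap\T$. The segment $p_ix'$ thus passes strictly below $\T$ at $x=x(q)$, contradicting the visibility of $x'$ from $p_i$. Once this main statement is in hand, the ``Hence'' assertion is immediate: an event of type (iii) to the left of $q$ would be flanked on its closer-to-$p_i$ side by a point of $\vorviewshed{p_i}{\A}$, i.e.\ a point visible from $p_i$, strictly to the left of $q$, and strictly closer to $p_i$, which the first half of the lemma just forbade.

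The main obstacle I foresee is the orientation bookkeeping in the first step: committing to a consistent sign convention for ``above/below the bisector'' that matches the sign of $f$ is where the hypothesis that $p_i$ is lower than $p_j$ is actually used, and it is easy to miscount signs when the bisector has large slope. After that, the proof reduces to a single application of halfplane convexity combined with the standard definition of terrain visibility.
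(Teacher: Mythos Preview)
Your proof is correct and follows essentially the same approach as the paper's: both identify the halfplane closer to $p_i$ as the region below $b_{i,j}$ (you do this via the affine function $f$, the paper states it directly from $y(p_i)<y(p_j)$) and then observe that the visibility segment from $p_i$ to any point in that halfplane on the far side of $q$ must pass strictly below $q\in\T$ at abscissa $x(q)$. Your version is simply more explicit, carrying out the convex-combination calculation with $f$ where the paper argues the same geometric fact in one sentence; note, incidentally, that the symmetry you invoke is really ``$q$ left of $p_i$'' versus ``$q$ right of $p_i$'' rather than the relative $x$-order of $p_i$ and $p_j$, but your argument never uses $x(p_i)>x(p_j)$ anyway.
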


\begin{proof} Since $p_i$ is assumed to have a smaller $y$-coordinate than that of $p_j$, 
the region of the plane closer to $p_i$ than to $p_j$ is the one below $b_{i,j}$. But any point $r$ that is on $b_{i,j}$ or below it and to the left (respectively, right) of $q$ is not visible from $p_i$ because the line segment $\overline{p_ir}$ contains a
point (specifically, the point vertically aligned with $q$) that lies strictly below the terrain surface; see Fig.~\ref{fig:bisectorij} for an illustration. 

The second part of the statement follows because visibility from $p_i$ is one of the conditions of events of type (iii).
\end{proof}

To prove our bounds, we also use this well-known property of visibility in 1.5D terrains, known as \emph{order claim}:

\begin{lemma}[Claim~2.1 in \cite{ben2007constant}] \label{lem:order}
Let $a,b,c,$ and $d$ be four points on $\T$ such that
$x(a)<x(b)<x(c)<x(d)$. If $a$ sees $c$ and $b$ sees $d$, then $a$
sees $d$.
\end{lemma}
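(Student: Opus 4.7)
The plan is to reduce the order claim to the assertion that $b$ and $c$ both lie on or below the line through $a$ and $d$, and then to use this to dominate the terrain by the segment $\overline{ad}$. Writing the terrain as the graph of a continuous piecewise-linear function $f$, the hypothesis ``$a$ sees $c$'' translates to $f(x)\le g_{ac}(x)$ on $[x(a),x(c)]$, and ``$b$ sees $d$'' to $f(x)\le g_{bd}(x)$ on $[x(b),x(d)]$, where $g_{uv}$ denotes the linear interpolant between two points $u,v$ evaluated at $x$. In particular, evaluating at $x(b)$ and $x(c)$, we get $y(b)\le g_{ac}(x(b))$ and $y(c)\le g_{bd}(x(c))$. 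The goal becomes $f\le g_{ad}$ on $[x(a),x(d)]$.

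The crucial intermediate step is to show $y(b)\le g_{ad}(x(b))$ (and by symmetry $y(c)\le g_{ad}(x(c))$). Writing the hypothesis bounds as $y(b)\le \alpha_1 y(a)+(1-\alpha_1)y(c)$ with $\alpha_1=(x(c)-x(b))/(x(c)-x(a))$ and $y(c)\le \alpha_2 y(b)+(1-\alpha_2)y(d)$ with $\alpha_2=(x(d)-x(c))/(x(d)-x(b))$, I would substitute the second into the first to obtain $y(b)\bigl(1-(1-\alpha_1)\alpha_2\bigr)\le \alpha_1 y(a)+(1-\alpha_1)(1-\alpha_2)y(d)$. A direct expansion gives $1-(1-\alpha_1)\alpha_2=(x(d)-x(a))(x(c)-x(b))/\bigl((x(c)-x(a))(x(d)-x(b))\bigr)>0$, and dividing through recovers precisely the convex combination coefficients $(x(d)-x(b))/(x(d)-x(a))$ and $(x(b)-x(a))/(x(d)-x(a))$ that define $g_{ad}(x(b))$.

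Once $b$ and $c$ both lie on or below line $ad$, the line through $a$ and $c$ passes through $a$ (on line $ad$) and through a point at or below line $ad$ at $x(c)$; being linear, it satisfies $g_{ac}\le g_{ad}$ on $[x(a),x(c)]$. Symmetrically $g_{bd}\le g_{ad}$ on $[x(b),x(d)]$. Chaining with the hypothesis bounds yields $f\le g_{ad}$ on each of these intervals, and since $x(b)<x(c)$ their union is $[x(a),x(d)]$, so $a$ sees $d$.

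The main obstacle is the algebraic identity that makes the substitution collapse exactly onto the convex combination defining $g_{ad}(x(b))$; I would verify the cancellation carefully. A more geometric alternative would be to let $p=\overline{ac}\cap\overline{bd}$, which exists because at $x(b)$ the segment $\overline{ac}$ lies weakly above $\overline{bd}$ and at $x(c)$ the reverse holds, observe that $p$ is on or above the terrain (being on two segments that are), and then argue that $p$ lies on or below line $ad$, so that $\overline{ad}$ dominates the polyline $a$-$p$-$d$ which in turn lies above the terrain. However, establishing ``$p$ is on or below line $ad$'' ultimately reduces to the same algebraic inequality, so the two approaches agree in substance.
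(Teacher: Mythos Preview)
Your proof is correct. The paper, however, does not prove this lemma at all: it is quoted verbatim as Claim~2.1 of~\cite{ben2007constant} and used as a black box, so there is no paper proof to compare against.

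For the record, your argument is a clean self-contained derivation of the order claim. Translating visibility into the pointwise inequality $f\le g_{uv}$, substituting one hypothesis bound into the other to obtain $y(b)\le g_{ad}(x(b))$ and (symmetrically) $y(c)\le g_{ad}(x(c))$, and then chaining $f\le g_{ac}\le g_{ad}$ on $[x(a),x(c)]$ with $f\le g_{bd}\le g_{ad}$ on $[x(b),x(d)]$ covers all of $[x(a),x(d)]$ since $x(b)<x(c)$. The algebraic identity you flag as the main obstacle does collapse exactly as you state: the numerator of $1-(1-\alpha_1)\alpha_2$ factors as $(x(c)-x(b))(x(d)-x(a))$, and after dividing through one recovers precisely the barycentric coefficients $(x(d)-x(b))/(x(d)-x(a))$ and $(x(b)-x(a))/(x(d)-x(a))$ of $g_{ad}(x(b))$. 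The geometric variant via $p=\overline{ac}\cap\overline{bd}$ that you sketch is also valid and is closer to how the claim is usually argued in the terrain-guarding literature, but, as you note, it rests on the same inequality.
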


We denote by  $\vorvis (\T,\A_{\ell})$ and $\colvis (\T,\A_{\ell})$ the Voronoi and colored visibility maps of $\T$ assuming that viewpoints can only see themselves and to their left. Further, we denote by \vorviewshed{p}{\A_{\ell}} the Voronoi viewshed of $p$ under the same assumption. $\vorvis (\T,\A_{r})$, $\colvis (\T,\A_{r})$ and \vorviewshed{p}{\A_{r}} are defined analogously using visibility to the right.
We can now prove the following:

\begin{figure*}[t]
    \centering
    \includegraphics[scale=1]{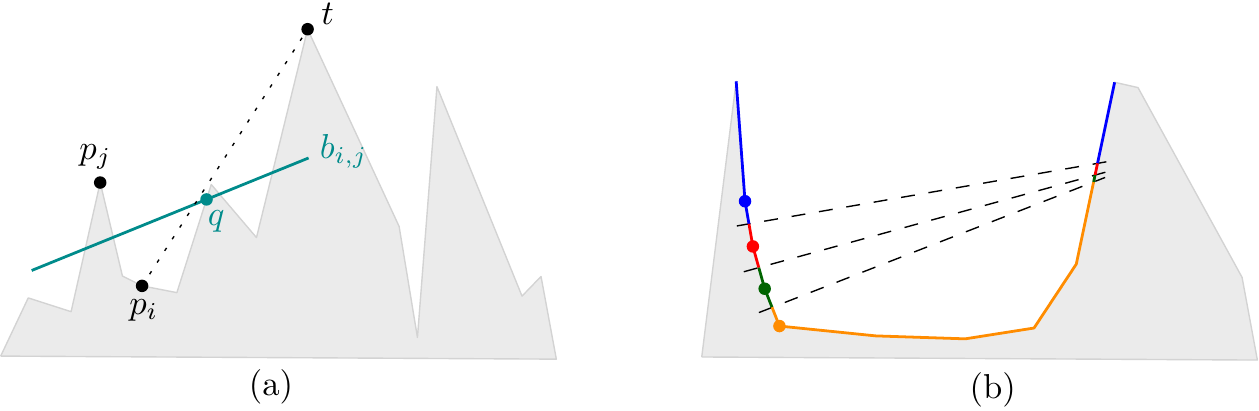}
    \caption{(a) 
    Illustration of the charging scheme of events of $\vorvis (\T,\A_{r})$ at the intersection of a bisector and $\T$ (proof of Theorem~\ref{thm:complexity}): The event $q$ is charged to $p_i$ because no point $t$ to the right of $q$ belongs to $\vorviewshed{p_i}{\A_{r}}$.
    (b) An instance where $k_v= k_c+2m-2$: $\colvis (\T,\A)$ consists of three portions (a portion visible by all viewpoints surrounded by two portions not visible by any viewpoint), while in $\vorvis (\T,\A)$ (illustrated in the figure) the visible portion is subdivided into $2m-1$ parts.}
    \label{fig:lower-bound}
\end{figure*}

\begin{theorem} \label{thm:complexity}
Given a terrain $\T$ with $n$ vertices and a set $\A$ of $m$ viewpoints placed on vertices of $\T$, the following bound holds:
\[
  k_v\leq \min\{k_c+m^2,2k_c+8m-4\}.
\]
\end{theorem}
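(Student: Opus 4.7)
The plan is to prove the two bounds separately, starting from the decomposition of events of $\vorvis(\T,\A)$ into the three listed types plus the $n$ terrain vertices. Type (i) and (ii) events are visibility changes and are therefore events of $\colvis(\T,\A)$; vertices are counted in both $k_v$ and $k_c$. Hence $k_v \le k_c + (\text{number of type (iii) events})$, and only this last count requires non-trivial geometric reasoning.

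For $k_v \le k_c + m^2$, I will argue that there are at most $2\binom{m}{2}$ type (iii) events. Fix a pair $\{p_i,p_j\}$ with $p_i$ lower. I first claim that any type (iii) event $q_{i,j}$ must lie to the right of $p_i$: otherwise, Lemma~\ref{lem:key} applied to $q_{i,j}$ itself would force every point immediately to the left of $q_{i,j}$ that is visible from $p_i$ to be strictly closer to $p_j$, contradicting that these points belong to $\vorviewshed{p_i}{\A}$. A symmetric argument forces any type (iii) event $q_{j,i}$ to lie to the left of $p_i$. Uniqueness on each side then follows from the same lemma: if two $q_{i,j}$-type (iii) events lay to the right of $p_i$, Lemma~\ref{lem:key} applied to the leftmost of them would force $p_j$ to be strictly closer at the second, contradicting that the second lies on $b_{i,j}$. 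Summing over the $\binom{m}{2}$ pairs yields at most $m(m-1) \le m^2$ type (iii) events.

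For the sharper bound $k_v \le 2k_c + 8m - 4$, my plan is a left/right decomposition. At each point of $\T$, the closest visible viewpoint (if any) lies either to the left or the right, partitioning $\T$ into left-phase and right-phase intervals separated by phase-switch events. Inside a left-phase (resp.\ right-phase) interval, every event of $\vorvis(\T,\A)$ coincides with an event of $\vorvis(\T,\A_\ell)$ (resp.\ $\vorvis(\T,\A_r)$), while each phase-switch is itself a type (iii) event with the two viewpoints on opposite sides of the switch. For each one-sided map, I will charge every type (iii) event to the lower viewpoint of its pair, as in Figure~\ref{fig:lower-bound}(a): after such an event, the lower viewpoint is strictly dominated wherever it remains visible, so any subsequent recharge of the same viewpoint must be separated from the previous one by a visibility-change event of $\colvis(\T,\A_\ell)$ or $\colvis(\T,\A_r)$. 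Combined with the identity $k_c^\ell + k_c^r = k_c + n \le 2k_c$ (each non-vertex event of $\colvis(\T,\A)$ is a visibility change of some $p$ and therefore belongs to exactly one of $\colvis(\T,\A_\ell)$ or $\colvis(\T,\A_r)$, depending on whether $p$ is to the left or right of the event) and with an analogous bound for phase-switches, careful tallying of the per-viewpoint charges yields the additive $8m - 4$ term.

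The hard part will be the charging inside the one-sided maps and at the phase-switches. Lemma~\ref{lem:key} controls only the Euclidean distance between the two bisector viewpoints, not the visibility of the dominator, so the possible come-back of a previously dominated viewpoint after its dominator becomes occluded must be explicitly charged against a prior visibility-change event; this is what ultimately forces the factor of $2$ in front of $k_c$. The precise additive constant $8m - 4$ is extracted by enumerating, per viewpoint, a bounded number of ``boundary'' events (leftmost and rightmost interval endpoints of the Voronoi viewshed in each of the two one-sided maps, together with the phase-switch contributions), with a small correction for boundary effects. The order claim (Lemma~\ref{lem:order}) is likely to be useful for keeping these per-viewpoint constants at the claimed value.
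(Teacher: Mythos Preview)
Your argument for the first bound $k_v\le k_c+m^2$ is correct and essentially the paper's: you split by which of $p_i,p_j$ is lower and use Lemma~\ref{lem:key} to show that each ordered pair contributes at most one type~(iii) event on each side of the lower viewpoint. The paper phrases this slightly differently (it names the leftmost/rightmost intersection of $b_{i,j}$ with $\T$ on each side of $p_i$ as the only candidates), but the content is the same.

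For the second bound your plan has two concrete gaps that, as written, prevent you from reaching $2k_c+8m-4$.

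\emph{Gap 1: the relation between $k_c^\ell+k_c^r$ and $k_c$.} You write $k_c^\ell+k_c^r=k_c+n\le 2k_c$ and then feed this into the rest of the chain. But if you use only $k_c^\ell+k_c^r\le 2k_c$, then together with $k_v^{\ell}+k_v^{r}\le k_c^{\ell}+k_c^{r}+O(m)$ and the overlay/phase bound $k_v\le 2(k_v^{\ell}+k_v^{r})$ you arrive at $k_v\le 4k_c+O(m)$, not $2k_c+8m-4$. The paper avoids this by first excluding terrain vertices from all counts in the proof and then observing that the only events of the one-sided colored maps that are \emph{not} events of $\colvis(\T,\A)$ sit at viewpoint positions; this gives the sharper inequality $k_c^{\ell}+k_c^{r}\le k_c+2m$, which is what makes the leading coefficient come out as~$2$.

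\emph{Gap 2: the one-sided charging.} You correctly state that after a type~(iii) event of, say, $\vorvis(\T,\A_r)$ the lower viewpoint $p_i$ is strictly dominated by $p_j$ wherever $p_i$ remains visible. But your next sentence weakens this to ``any subsequent recharge must be separated by a visibility-change event'', which would only bound type~(iii) events by a quantity of order $k_c^{r}$ and again doubles the leading coefficient. The crucial point you are missing is that the order claim (Lemma~\ref{lem:order}) shows there is \emph{no} come-back: for any $t$ to the right of the event with $p_i$ visible at $t$, applying Lemma~\ref{lem:order} with $a=p_j$, $b=p_i$, $c=q$, $d=t$ gives that $p_j$ is also visible at $t$, and Lemma~\ref{lem:key} then says $p_j$ is strictly closer. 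Hence $\vorviewshed{p_i}{\A_r}$ is empty to the right of the event, each viewpoint is charged at most once, and there are at most $m-1$ type~(iii) events in $\vorvis(\T,\A_r)$ (and symmetrically in $\vorvis(\T,\A_\ell)$). This is where the additive $O(m)$ term originates.

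With these two fixes your left/right decomposition becomes the paper's proof: $k_v\le 2(k_v^{\ell}+k_v^{r})$ by splitting each overlay cell with at most one bisector crossing, $k_v^{\ell}+k_v^{r}\le k_c^{\ell}+k_c^{r}+2(m-1)$ by the no-come-back charging, and $k_c^{\ell}+k_c^{r}\le k_c+2m$, combining to $2k_c+8m-4$. Your ``phase-switch'' framing is compatible with this but does not by itself give a bound on the number of switches; the paper's overlay argument (each maximal piece common to a region of $\vorvis(\T,\A_\ell)$ and one of $\vorvis(\T,\A_r)$ is cut at most once) is what supplies the factor~$2$.
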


\begin{proof}
Since the vertices of $\T$ are counted in both $k_v$ and $k_c$, we exclude them from our analysis.

We start by proving that $k_v\leq k_c+m^2$. Notice that events of type (i) and (ii) are also events of $\colvis (\T,\A)$. Let us prove that there are at most $m^2$ events of type (iii).

Let $p_i,p_j$ be a pair of viewpoints. If $p_i$ and $p_j$ are at the same height, $b_{i,j}$ is vertical and only intersects $\T$ once, so there is at most one event of $\vorvis (\T,\A)$ on $b_{i,j}\cap \T$. Otherwise, we assume without loss of generality that $p_i$ is lower than $p_j$. By Lemma~\ref{lem:key} the only candidates for events $q_{i,j}$ or $q_{j,i}$ of type (iii) are the left-most intersection point of type $b_{i,j}\cap \T$ among all such points to the right of $p_i$ and the right-most one among all points to the left. Thus, every pair of viewpoints creates at most two events of type (iii).

We next prove the second upper bound for $k_v$. We denote by $k_v^{\ell}$, $k_c^{\ell}$, $k_v^{r}$ and $k_c^{r}$ the total complexity of all the regions of $\vorvis (\T,\A_{\ell})$, $\colvis (\T,\A_{\ell})$, $\vorvis (\T,\A_{r})$ and $\colvis (\T,\A_{r})$, respectively.

Each event of $\colvis (\T,\A)$ can be uniquely assigned to an event of either $\colvis (\T,\A_{\ell})$ or $\colvis (\T,\A_{r})$: If the event concerns viewpoint $p_i$ (becoming visible or invisible) and it is to the left of $p_i$, the same event appears in $\colvis (\T,\A_{\ell})$ and it is assigned to it. If the event is to the right of $p_i$, it is assigned to the same event in $\colvis (\T,\A_{r})$. If the event is on $p_i$, it is easy to see that $p_i$ is either the left-most point of $\T$, in which case we assign it to the same event in $\colvis (\T,\A_{r})$, or the right-most point of $\T$, in which case we assign it to the same event in $\colvis (\T,\A_{\ell})$.

Each event of $\colvis (\T,\A_{\ell})$ or $\colvis (\T,\A_{r})$ that did not get any event of $\colvis (\T,\A)$ assigned to it lies  at the same position of some viewpoint that is  not the left-most or the right-most point of $\T$. Indeed, if $p_i\in \A$ is such a viewpoint, then, at the position where it lies, $p_i$ becomes visible  in $\colvis (\T,\A_{r})$ and invisible in $\colvis (\T,\A_{\ell})$\footnote{Strictly speaking, in $\colvis (\T,\A_{\ell})$ $p_i$ becomes invisible immediately to its right.}, but there are no such events in $\colvis (\T,\A)$ (where there is a portion of $\T$ visible from $p_i$ containing $p_i$ not on its boundary but in its interior). This proves that $k_c^{\ell}+k_c^r\leq k_c+2m$. 

Next, we show a relationship between $k_v^r$ and $k_c^r$. Suppose that we traverse $\vorvis (\T,\A_{r})$ from left to right, and we stop at every event that is not an event of $\colvis (\T,\A_{r})$, that is, the event is at the intersection of a bisector $b_{i,j}$ and $\T$. Since in $\vorvis (\T,\A_{r})$ viewpoints can only see themselves and to their right, $p_i$ and $p_j$ are to the left of the event $q$. Without loss of generality, suppose that $p_j$ is to the left of $p_i$. As in the proof of Lemma~\ref{lem:key}, if $p_i$ was higher than $p_j$, no point on $b_{i,j}$ to the right of $p_i$ would be visible from $p_j$, contradicting the existence of the event at the intersection of $b_{i,j}$ and $\T$. Further, $p_i$ and $p_j$  are not at the same height because both are to the left of $q$. Hence, $p_i$ is lower than $p_j$. Let $t$ be a point to the right of $q$ that is visible from $p_i$ (see Fig.~\ref{fig:lower-bound}a). By the Lemma~\ref{lem:key} with $a=p_j$, $b=p_i$, $c=q$ and $d=t$, $t$ is also visible from $p_j$. By Lemma~\ref{lem:key}, $t$ is closer to $p_j$ than to $p_i$. Therefore, $t\notin \vorviewshed{p_i}{\A_{r}}$. This implies that there is no portion of \vorviewshed{p_i}{\A_{r}} to the right of $q$ and, in particular no more event caused by the intersection of a bisector of $p_i$ and another viewpoint. We charge the event $q$ to $p_i$, and obtain that there are at most $m-1$ events of this type. Hence, $k_v^r\leq k_c^r + m-1$.


Finally, we derive a bound for $k_v$ based on $k_v^r$ and $k_v^{\ell}$. 

Let us take a continuous portion $\T'$ of \T that belongs to the Voronoi viewshed of some viewpoint $p_i$ in $\vorvis (\T,\A_{r})$, and to the Voronoi viewshed of some viewpoint $p_j$ in $\vorvis (\T,\A_{\ell})$. Let $\T'$ be maximal with this property. Notice that $p_i$ is to the left of $\T'$, while $p_j$ is to its right. Furthermore, in $\vorvis (\T,\A$), every point of $\T'$ belongs to the Voronoi viewshed of $p_i$ or $p_j$. We next show that $b_{i,j}$ intersects $\T'$ at most once. The claim is clear when $y(p_i)=y(p_j)$. Otherwise, we assume without loss of generality that $p_i$ is lower than $p_j$. Let $q$ be the left-most intersection point (if any) between $b_{i,j}$ and $\T'$. We have that $q$ is to the right of $p_i$. Additionally, all points of $\T'$ to the right of $q$ are visible from $p_i$ because they belong to the Voronoi viewshed of $p_i$ in $\vorvis (\T,\A_{r})$. By Lemma~\ref{lem:key}, all points of $\T'$ to the right of $q$ are closer to $p_j$ than to $p_i$. In consequence, there is no intersection point between $b_{i,j}$ and $\T'$ to the right of $q$, and $b_{i,j}$ intersects $\T'$ at most once. This implies that $\T'$ gets split into at most two portions of the final diagram. 

The situation where in at least one of $\vorvis (\T,\A_{r})$ or $\vorvis (\T,\A_{\ell})$ a portion does not have any visible viewpoint is trivial. 

Consequently, $k_v\leq 2(k_v^r + k_v^{\ell})$.

Putting everything together,
\begin{align*}
k_v ~&\le~ 2(k_v^r + k_v^{\ell})\\
        &\le~ 2(k_c^r+k_c^{\ell}+2m-2)\\
        &\le~2(k_c+4m-2)=2k_c+8m-4.
        \qedhere
\end{align*}
\end{proof}

Regarding lower bounds, we show the following:

\begin{example}
There exists a terrain with $n$ vertices and a set of $m$ viewpoints placed on vertices of the terrain such that $k_v= k_c+2m-2$. The construction is illustrated in Fig.~\ref{fig:lower-bound}b.
\end{example}

\section{Computation of the Voronoi visibility map} \label{sec:VorVis}


The algorithm we propose is simple: We sweep the terrain from left to right, and maintain the set of visible points in a balanced binary search tree, where the key of every viewpoint is the Euclidean distance to the point of the terrain currently swept by the sweep line; the relevant viewpoint is always the closest visible one. The algorithm is based on the observation that maintaining the whole set of viewpoints sorted by distance to $\T$ might be expensive (since a bisector of two viewpoints might intersect the terrain $\Theta(n)$ times), while, by Lemma~\ref{lem:key}, maintaining the set of the \emph{visible} ones is not (since, out of the potential $\Theta(n)$ intersections, the two viewpoints are visible in at most two). Thanks to this observation, new events of $\vorvis (\T,\A)$ are found in $O(\log m)$ time rather than $O(m)$. We next present the details.

The algorithm sweeps the terrain from left to right and stops at points that are candidates for event points. The candidates for events of type (i) and (ii) are the events of $\colvis (\T,\A)$. We explain in Section~\ref{subsec:cand} which are the candidates for events of type (iii).

\subsection{Events of $\colvis (\T,\A)$} 

We compute $\colvis (\T,\A)$ using the version of the algorithm from~\cite{ter-vis2014} that returns a doubly-linked list with the vertices of $\colvis (\T,\A)$ sorted from left to right, together with the visibility information provided as follows: The visible viewpoints are specified for the first component of $\colvis (\T,\A)$ and, for the other components, the algorithm outputs the changes in the set of visible viewpoints with respect to the component immediately to the left.

\subsection{Candidates for events of type (iii)} \label{subsec:cand}

We next describe the candidates for events of type (iii) associated with a pair of viewpoints $p_i,p_j\in \A$. 

If $p_i$ and $p_j$ are at the same height, the only intersection point of $b_{i,j}$ with $\T$ lies between both viewpoints. If such point is visible from both $p_i$ and $p_j$, we add it as a candidate for event of type (iii). 

Otherwise, we may assume, without loss of generality, that $p_i$ is lower than $p_j$. By Lemma~\ref{lem:key} the only candidates for events of type (iii) involving $p_i$ and $p_j$ are the left-most intersection point of type $b_{i,j}\cap \T$ among all such points to the right of $p_i$ and the right-most one among all points to the left. For the sake of simplicity, we first assume that $b_{i,j}$ is not tangent to $\T$ at any of these intersection points. Then each of these intersection points is added to the list of candidates for events swept by the line if and only if it is visible from both $p_i$ and $p_j$. 

Finally, let $q$ be one of the two candidates for events of type (iii) involving $p_i$ and $p_j$. Suppose that $q$ is to the right of $p_i$ (the other case is symmetric). If $b_{i,j}$ is tangent to $\T$ at $q$, points of $\T$ infinitesimally to the left or right of $q$ are closer to $p_i$ than to $p_j$ (while $q$ is equidistant). Additionally, $p_i$ becomes invisible right after $q$. In consequence, it is not needed to add $q$ to the list of candidates for events of type (iii): Right before $q$, the algorithm knows that $p_i$ is closer to the terrain than $p_j$. At $q$, the algorithm processes that $p_i$ becomes invisible, and $p_j$ (if it is visible) automatically gets higher priority than $p_i$ in the list of candidates for the ``owner" of the current Voronoi visibility region.
The key argument (there are no more candidates for events of type (iii) to the right of $q$) also holds in this case.

\begin{figure}
	\centering
 \lineskip=-\fboxrule
  \ovalbox{\begin{minipage}{\dimexpr \textwidth-55\fboxsep-1\fboxrule}
\begin{algorithmic}[1]
\vspace{0.25cm}
\Require  $\T,\A,E$
\Ensure  $\vorvis(\T,\A)$
\State  $H:= \emptyset$, $t_{\ell}:=$ left-most point of $\T$, and $p_{*} := \bot$
\While{$E \ne \emptyset$ }
\State extract the next element $q$ of $E$
\If {$q$ is the last element of $E$}
 \State output $((t_{\ell},q),p_{*})$
 \State break
\ElsIf{some viewpoint $v$ becomes visible at $q$}
    \State insert $v$ in $H$ 

\ElsIf{some viewpoint $v$ becomes invisible at $q$ }
\State delete $v$ from $H$ 
    
    
 \ElsIf{$q$ is an intersection point between $\T$ and $b_{i,j}$ }
\State update the positions of $p_i$ and $p_j$ in $H$ 
    
    \EndIf
    \State update $p_{min}$
    \If {$p_{\min}\neq p_{*}$}
    \State output $((t_{\ell},q),p_{*})$
    \State   $t_{\ell}:= q$, $p_{*}:=p_{\min}$
    \EndIf

\EndWhile
\vspace{0.20cm}

\end{algorithmic}
 \end{minipage}}
\caption{Computation of  $\vorvis(\T,\A)$. $E$ is the list of potential events, $H$ is the tree containing the viewpoints that are currently visible, $t_{\ell}$ is the left endpoint of the current portion of $\T$, $p_{*}$ is the closest visible viewpoint in that portion, and $p_{\min}$ is the viewpoint in $H$ with the minimum key.}\label{alg:cap}
\label{alg:vorvis}
\end{figure}

\subsection{Data structures} \label{subsec:datastr}

The algorithm uses the following data structures. 

We maintain a balanced binary search tree $H$ that contains the viewpoints that are visible at the current point of the sweep. These viewpoints are sorted in the tree according to their corresponding key, which is the distance from the viewpoint to the current intersection point between the sweep line and the terrain. The keys are not stored in the tree because they change as the sweep line moves, but each of them can be computed when needed in constant time. The algorithm always chooses as the ``owner'' of the current Voronoi visibility region the viewpoint of $H$ with the minimum key.

In $H$, we perform insertions and deletions when viewpoints become visible and invisible, respectively. During these operations, when at some node of the tree we need to decide whether we move to its left or right subtree, we simply compute the key associated to the viewpoint in that node, and compare it with the key of the viewpoint that we want to insert or delete. Therefore, insertions and deletions can be performed in the standard way in $O(\log m)$ time.

When the sweep line encounters a candidate for an event of type (iii) (let us call it $q$), the relative order of two visible viewpoints with respect to their current distance to the terrain changes (formally speaking, it changes right after $q$). A possible way to reflect this in $H$ is to delete from the tree one of the two viewpoints associated with $q$, and then insert it again using as keys the distances from the viewpoints to a point of $\T$ infinitesimally to the right of $q$ (and still to the left of the next event in the list). Thus, candidates for events of type (iii) can be processed in $H$ in $O(\log m)$ time.

Additionally, we use a data structure that allows us to answer ray-shooting queries in $\T$ in $O(\log n)$ time~\cite{cegghss-rspugt-94}. Such queries are used to decide whether a given pair of points are mutually visible, and to find the relevant intersections between $\T$ and the bisector of a pair of viewpoints.

\subsection{Description of the algorithm} 

Given $q,r$ on $\T$ with $x(q)<x(r)$, we denote by $\T(q,r)$ and $\T[q,r]$ the open and closed portion of the terrain between $q$ and $r$, respectively.

Our algorithm, outlined in Fig.~\ref{alg:vorvis}, takes as input $\T$, $\A$ and a list $E$ of potential events sorted from left to right containing all events of $\colvis (\T,\A)$ together with the $O(m^2)$ candidates for events of type (iii). The list $E$ also contains an event at the right-most point of the terrain.

The algorithm outputs $\vorvis (\T,\A)$ as a list of pairs $((q,r),p_i)$ such that $p_i$ is the closest visible viewpoint in $\T(q,r)$ (if $\T(q,r)$ is not visible from any viewpoint, we output $((q,r),\bot)$). The variables $t_{\ell}$ and $p_{*}$ in the algorithm refer to the left endpoint of the portion of $\T$ currently analyzed by the algorithm and the closest visible viewpoint in that portion, respectively. 
The variable $p_{\min}$ refers to the viewpoint in $H$ with the minimum key (if $H$ is empty, $p_{\min}=\bot$).

Initially, $H:= \emptyset$, $t_{\ell}:=$ left-most point of $\T$, and $p_{*} := \bot$.

We repeat the following procedure until $E$ is empty: We  extract the next element $q$ from $E$, and proceed according to four cases, corresponding to lines 4, 7, 9, and 11 of the pseudocode in Fig.~\ref{alg:vorvis}. For the sake of simplicity, in the description in Fig.~\ref{alg:vorvis} we deliberately ignore the situation where several events of distinct type occur at the same point of $\T$, which we tackle in the next paragraph. The cases in lines 4, 7 and 9 are clear. Regarding the case starting at line 11, in line 12 we update the positions of $p_i$ and $p_j$ in $H$ as explained in Section~\ref{subsec:datastr} (see the paragraph where we discuss the case where the sweep line encounters a candidate for an event of type (iii)). We also point out that, if $q$ is an intersection point between $\T$ and more than one bisector of type $b_{i,j}$, the bisectors can be processed in any order.\footnote{By our general position assumptions, $q$ is not equidistant from three or more viewpoints, so at most one of the bisectors through $q$ might involve $p_{*}$.}








It remains to explain how to deal with the situation where several events of distinct type occur at the same point of $\T$. In this case, we first perform the modifications in $H$ triggered by \emph{all} the events at that point (insertions of viewpoints becoming visible, deletions of viewpoints becoming invisible and updates of the positions of pairs of viewpoints). After updating $H$ in this way, we update $p_{\min}$; if $p_{\min}\neq p_{*}$, we output $((t_{\ell},q),p_{*})$, set $t_{\ell}:= q$, and set $p_{*}:=p_{\min}$.



\subsection{Correctness and running time} 

We first show that the algorithm for $\vorvis (\T,\A)$ always selects the closest visible viewpoint. Changes in the visibility status of the viewpoints correspond to events of $\colvis (\T,\A)$, which are added to $E$, so the set of visible viewpoints contained in $H$ is correct at any time of the sweep. Regarding the distances from the viewpoints to the terrain, every time that a viewpoint is swept or becomes visible, it is inserted in $H$ correctly (according to its current distance to the terrain).
Changes in the order of the visible viewpoints with respect to their distances to $\T$
coincide with intersections of $\T$ with the bisectors among them. As argued in the proof of Theorem~\ref{thm:complexity}, for every pair of viewpoints it happens at most twice that both viewpoints are visible at an intersection point between $\T$ and their bisector. Such an event is precomputed and stored in $E$, and later processed by the algorithm. 

We next analyze the complexity of the algorithm.

The map $\colvis (\T,\A)$ can be computed in $O(n + (m^2 + k_c) \log n)$ time using the algorithm in~\cite{ter-vis2014}. This map has at most $k_c$ regions; however, due to the fact that several viewpoints might become visible or invisible at the same time, when sweeping $\colvis (\T,\A)$ from left to right, the number of times that a viewpoint becomes visible or invisible, added over all viewpoints, can be higher; an upper bound of $k_c+m^2$ is given in~\cite{ter-vis2014}. Each time that a viewpoint changes its visibility status, we perform an insertion or a deletion in $H$, which takes $O(\log m)$ time. The algorithm processes at most $m^2$ intersections between the terrain and bisectors of endpoints in $O(\log m)$ time each. Consequently, $\vorvis (\T,\A)$ can be extracted from $\colvis (\T,\A)$ in $O((m^2 + k_c) \log m)$ time. The space complexity of the algorithm is the space required to store the terrain, the events and the data structures, that is, $O(n + m^2 + k_c)$. 

We conclude with the following:

\begin{theorem}
The Voronoi visibility map of a 1.5D terrain can be constructed in $O(n + (m^2 + k_c) \log n)$ time and $O(n + m^2 + k_c)$ space.
\end{theorem}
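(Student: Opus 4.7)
The plan is to verify the three main claims that together yield the theorem: that the required event list can be built within the stated budget, that the sweep maintains the tree $H$ correctly, and that the per-event work is logarithmic. First I would compute $\colvis(\T,\A)$ with the algorithm of~\cite{ter-vis2014}, which costs $O(n+(m^2+k_c)\log n)$ time and $O(n+k_c)$ space, and returns a left-to-right list of events together with, for each event, the set of viewpoints whose visibility status changes. Next I would build the $O(m^2)$ candidates for events of type (iii) as described in Section~\ref{subsec:cand}: for each pair $p_i,p_j$, use $O(\log n)$ ray-shooting queries on the data structure of~\cite{cegghss-rspugt-94} to locate the relevant at-most-two intersections of $b_{i,j}$ with $\T$ guaranteed by Lemma~\ref{lem:key}, and test visibility from $p_i$ and $p_j$ with $O(1)$ further ray-shooting queries. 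Merging the two event lists and sorting yields $E$ in $O((m^2+k_c)\log n)$ time and $O(n+m^2+k_c)$ space.

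Second, I would argue correctness of the sweep. The invariant to be maintained is that, just before processing an event at sweep position $q$, the tree $H$ contains exactly the viewpoints visible from $q^-$ (a point infinitesimally to the left of $q$), ordered by their Euclidean distance to $q^-$. The set of visible viewpoints is correct because every visibility change along $\T$ is an event of $\colvis(\T,\A)$, hence in $E$, and each such change triggers a single insertion or deletion in $H$. The distance order is correct because the relative order of two visible viewpoints $p_i,p_j$ in the key function changes exactly at a point where $b_{i,j}$ meets $\T$ \emph{with both visible}; by the case analysis of Section~\ref{subsec:cand} (combined with Lemma~\ref{lem:key} for the case where $p_i,p_j$ are at different heights, and the tangency argument otherwise), all such order changes either occur at candidates of type (iii) already inserted in $E$, or are absorbed harmlessly by an event of type (i) where one of the two viewpoints simultaneously becomes invisible. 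Consequently the minimum-key element of $H$ at each sweep position is indeed the closest visible viewpoint, which is the owner of the current Voronoi region.

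Third, I would bound the running time by charging. The map $\colvis(\T,\A)$ contributes $k_c+m^2$ visibility-change events in total (using the bound from~\cite{ter-vis2014}), and there are at most $2\binom{m}{2}=O(m^2)$ candidates of type (iii); so $|E|=O(m^2+k_c)$. Each event performs $O(1)$ updates on $H$, each consisting of an insertion, deletion, or a swap realised as a delete-then-insert, and each of these costs $O(\log m)$ since the comparison at an internal node computes a key on the fly in $O(1)$. Adding the update of $p_{\min}$ and the potential output of a new region, the total sweep time is $O((m^2+k_c)\log m)$. Combining with the construction of $\colvis(\T,\A)$ and of the candidate list dominates at $O(n+(m^2+k_c)\log n)$, and the space is $O(n+m^2+k_c)$ for the terrain, the ray-shooting structure, $E$, and $H$.

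The main obstacle I expect is the correctness of the distance-order invariant in the presence of the several degeneracies handled informally in Sections~\ref{subsec:cand}--\ref{subsec:datastr}: tangencies of $b_{i,j}$ with $\T$, coincident events at the same sweep point, and pairs of viewpoints that are never simultaneously visible at an intersection of their bisector with $\T$. The cleanest way to discharge these is to treat ``right after $q$'' rather than ``at $q$'' as the instant at which $H$ is compared against the ground truth, to process all same-point events atomically before recomputing $p_{\min}$, and to appeal to Lemma~\ref{lem:key} to certify that no relevant order change to the right of $q$ has been missed — exactly the argument already used in the proof of Theorem~\ref{thm:complexity}.
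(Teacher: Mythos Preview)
Your proposal is correct and follows essentially the same approach as the paper: build $\colvis(\T,\A)$ via~\cite{ter-vis2014}, add the $O(m^2)$ bisector candidates from Lemma~\ref{lem:key} using ray-shooting, and sweep with the balanced tree $H$, charging $O(\log m)$ per visibility change (of which there are $k_c+m^2$) and per type-(iii) candidate. Your correctness argument, invariant formulation, and treatment of coincident events and tangencies all mirror Sections~\ref{subsec:cand}--\ref{subsec:datastr} and the discussion preceding the theorem; if anything, you are slightly more explicit than the paper about the event-list construction and the ``right after $q$'' convention.
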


\section{Extensions}
In this section, we present adaptations of the previous algorithm to compute related maps.

\subsection{Higher order Voronoi visibility maps}

We define the $k$th-order Voronoi visibility map $\kvorvis (\T,\A)$ as a partitioning of $\T$ into regions that have the same set of $\ell$ closest visible viewpoints, where $\ell$ is the minimum of $k$ and the number of visible viewpoints in the region. Observe that the $m$th-order Voronoi visibility map is equal to $\colvis(\T,\A)$.  

We can easily compute $\kvorvis (\T,\A)$ by adapting the algorithm from Section~\ref{sec:VorVis}. In this case, we need to maintain two additional variables: the total number $b$ of viewpoints that are visible at the point currently swept by the line, and, from the current set of $\ell$ closest visible viewpoints, the furthest one, denoted $p_{\max}$. Analogously to the algorithm for $\colvis (\T,\A)$, for space reasons our algorithm for $\kvorvis (\T,\A)$ returns a doubly-linked list with the vertices of $\kvorvis (\T,\A)$ sorted from left to right, together with the following information: The set of $\ell$ closest visible viewpoints is specified for the first component of $\kvorvis (\T,\A)$ and, for the other components, the algorithm outputs the changes in the set of $\ell$ closest visible viewpoints with respect to the component immediately to the left.

Let $q$ be the next element  from the list of events $E$, computed as in the previous section. We explain in detail the case where one or more viewpoints become visible at $q$, and leave the remaining cases to the interested reader. Let $\A'$ denote the set of viewpoints becoming visible at $q$. We update $b$. If, after this update, $b\leq k$, we report vertex $q$ together with the set $\A'$ (containing the new viewpoints in the set of $\ell$ closest visible viewpoints). We also insert the viewpoints of $\A'$ in $H$. Otherwise, let $b'$ and $b$ be the number of visible viewpoints right before $q$ and at $q$, respectively. If $b'<k$, we remove from $\A'$ the set of $k-b'$ closest viewpoints to $q$ (obtained after sorting the viewpoints of $\A'$ according to their distance to $q$), we add these viewpoints to a set $\A'_{in}$, and we insert them in $H$. After possibly performing this operation in $\A'$, we proceed as follows: We extract the closest viewpoint to $q$ of $\A'$; if it is closer to $q$ than $p_{\max}$, we add this viewpoint to $\A'_{in}$, we insert it in $H$, we add viewpoint $p_{\max}$ to $\A_{out}$, and we update $p_{\max}$. Notice that $p_{\max}$ can be updated by finding the predecessor in $H$ of the ``old'' $p_{\max}$, that is, in $O(\log m$) time. We repeat this process until $\A'$ is empty or the next element in $\A'$ is farther to $q$ than $p_{\max}$. Then we insert the remaining viewpoints of $\A'$ (if any) in $H$. Finally, we report vertex $q$ together with the set $\A'_{in}$ (containing the new viewpoints in the set of $\ell$ closest visible viewpoints) and the set $\A_{out}$ (containing the viewpoints that stop belonging to the set of $\ell$ closest visible viewpoints).

Clearly, every change in the visibility status of a viewpoint and every intersection of $\T$ with the bisector of two visible viewpoints can be processed in $O(\log m + \log n)$ time. Hence, we obtain:

\begin{theorem}
The $k$th-order Voronoi visibility map of a 1.5D terrain can be constructed in $O(n + (m^2 + k_c) \log n)$ time and $O(n + m^2 + k_c)$ space.
\end{theorem}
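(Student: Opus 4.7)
My plan is to build $\kvorvis(\T,\A)$ by re-running the left-to-right sweep from Section~\ref{sec:VorVis}, while augmenting the data structure so that the sweep tracks not merely the closest visible viewpoint but the $\ell=\min\{k,b\}$ closest ones, where $b$ is the number of viewpoints currently visible. I would keep the balanced binary search tree $H$ of visible viewpoints ordered by Euclidean distance to the sweep point, and add two scalars: the counter $b$ (updated trivially each time a viewpoint enters or leaves $H$), and a pointer $p_{\max}$ to the current $\ell$th closest element of $H$. The sweep list $E$ remains the same as in Section~\ref{sec:VorVis}: the events of $\colvis(\T,\A)$ together with the $O(m^2)$ candidate intersections of bisectors with $\T$ provided by Lemma~\ref{lem:key}; in particular, the cardinality of $E$ is still $O(m^2+k_c)$.

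For each event $q$ I would perform the appropriate insertions/deletions/reorderings on $H$ as in Section~\ref{sec:VorVis}, then update $b$, and finally update $p_{\max}$. The update of $p_{\max}$ is what needs care: when a single viewpoint $v$ changes status or when two neighbours in $H$ swap because of a bisector crossing, $p_{\max}$ can advance by at most one position in $H$, so a predecessor or successor query starting from the old $p_{\max}$ yields the new one in $O(\log m)$ time. When several events coincide at the same terrain point (for example, a batch $\A'$ of viewpoints becoming visible simultaneously, as in the text preceding the theorem), I would first do all the insertions/deletions and then walk $p_{\max}$ the right number of positions; because the numbers of insertions, deletions and bisector swaps at $q$ correspond to events in $E$, the total number of position shifts of $p_{\max}$ summed over the whole sweep is still $O(m^2+k_c)$. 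After updating $p_{\max}$, the differences $\A'_{in}$ and $\A_{out}$ between the previous set of $\ell$ closest viewpoints and the new one can be reported directly from the insertions performed and the elements that crossed $p_{\max}$.

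The potentially tricky step will be making sure the incremental update of $p_{\max}$ is truly $O(\log m)$ per event even in the mixed cases, for instance when $b$ jumps from below $k$ to above $k$ at a single terrain point where several viewpoints become visible: here I would sort the batch by distance to $q$, which is an $O(|\A'|\log|\A'|)$ overhead chargeable to those very events, and then perform $|\A'|$ insertions interleaved with at most $|\A'|$ advances of $p_{\max}$. Correctness of the maintained closest-$\ell$ set follows inductively from the invariants already established for $\vorvis(\T,\A)$: the set of visible viewpoints in $H$ is correct after each event by the correctness of $\colvis(\T,\A)$, and the distance order in $H$ is correct by Lemma~\ref{lem:key}, since the only relevant reorderings are the precomputed bisector-$\T$ intersections in $E$.

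Putting the pieces together, $\colvis(\T,\A)$ is computed in $O(n+(m^2+k_c)\log n)$ time using the algorithm of~\cite{ter-vis2014}, sorting $E$ costs $O((m^2+k_c)\log(m^2+k_c))=O((m^2+k_c)\log n)$ since $k_c=O(mn)$, and each of the $O(m^2+k_c)$ events is processed in $O(\log m+\log n)=O(\log n)$ time for updates to $H$, $b$ and $p_{\max}$, together with at most one ray-shooting query in $O(\log n)$ time when a bisector candidate must be verified. The total running time is therefore $O(n+(m^2+k_c)\log n)$, and the space is dominated by $\T$, the event list $E$, the tree $H$ and the output description, i.e.\ $O(n+m^2+k_c)$, yielding the stated bounds.
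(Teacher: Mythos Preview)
Your proposal is correct and follows essentially the same approach as the paper: both adapt the sweep of Section~\ref{sec:VorVis} by maintaining the counter $b$ and the pointer $p_{\max}$ to the $\ell$th closest visible viewpoint, updating $p_{\max}$ by single predecessor/successor steps in $H$ (at most one per individual visibility change or bisector swap), and handling a batch $\A'$ of simultaneous visibility changes by sorting it by distance to $q$ and charging the resulting $O(|\A'|\log m)$ work to the $|\A'|$ events. Your amortization argument bounding the total number of $p_{\max}$ shifts by $O(m^2+k_c)$ is exactly the mechanism implicit in the paper's claim that ``every change in the visibility status of a viewpoint and every intersection of $\T$ with the bisector of two visible viewpoints can be processed in $O(\log m + \log n)$ time.''
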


\subsection{Other distances}

Given $q,r$ on $\T$ with $x(q)<x(r)$, two other natural distances between $q$ and $r$ are the Euclidean length of the portion $\T[q,r]$, which we will call \emph{Euclidean distance along the terrain}, and the number of vertices in the portion $\T(q,r)$, which we will call \emph{link distance}.\footnote{For the link distance, we take the open portion of the terrain $\T(q,r)$ so that any two points on the same edge (including the endpoints) are at (link) distance zero.}
We may define the Voronoi visibility map of $\T$ based on these distances.

The relevant difference with respect to the standard case is the shape of the bisectors between two viewpoints $p_i$ and $p_j$. In the case of the Euclidean distance along the terrain, there is exactly one point of $\T$ that is equidistant to $p_i$ and $p_j$, and this point can be computed in $O(\log n)$ time after preprocessing $\T$ so that the Euclidean distance along the terrain between any pair of vertices of $\T$ can be computed in $O(1)$ time.\footnote{If we store, for every vertex $q$ of $\T$, the Euclidean distance along the terrain $q_d$ between $q$ and the left-most point of $\T$, then the Euclidean distance along the terrain between vertices $q,r$ of $\T$ such that $x(q)<x(r)$ is $r_d-q_d$.} Regarding the link distance, if there is an odd number of vertices between $p_i$ and $p_j$, there is exactly one vertex of $\T$ that is equidistant to $p_i$ and $p_j$, and this vertex can be computed in $O(1)$ time. However, if there is an even number of vertices between $p_i$ and $p_j$, there is an open edge of $\T$ such that all of its points are at the same link distance from $p_i$ and $p_j$. In this case, we must either allow the border between two consecutive Voronoi regions to be 1-dimensional, or, if simplicity is more desirable, we might (artificially) select an interior point of this edge as the intersection point between $\T$ and the bisector of $p_i$ and $p_j$. 

After adding the corresponding candidates for events of type (iii) based on the explanations in the previous paragraph, the rest of the algorithm is equal to the one for the general case. The running time remains the same because, given a pair of points on $\T$, in both cases the distance between them can be computed in $O(1)$ time. Therefore, we conclude:

\begin{theorem}
The Voronoi visibility map of a 1.5D terrain with respect to the Euclidean distance along the terrain or to the link distance can be constructed in $O(n + (m^2 + k_c) \log n)$ time and $O(n + m^2 + k_c)$ space.
\end{theorem}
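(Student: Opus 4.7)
The plan is to reduce the problem to the framework of Section~\ref{sec:VorVis}, modifying only the components that depend on the specific choice of distance: (a) how keys are stored and compared in the balanced binary search tree $H$, and (b) how the candidates for events of type (iii) (points of $\T$ equidistant from two visible viewpoints) are computed and enumerated. Once these pieces are in place, the sweep algorithm and its analysis go through essentially verbatim.

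First I would set up the preprocessing. For the Euclidean distance along the terrain, I would precompute, for each vertex $v$ of $\T$, the arc length $v_d$ from the left-most point of $\T$ to $v$; together with linear interpolation on each edge this gives the distance between any two points of $\T$ in $O(1)$ time after $O(n)$ preprocessing. For the link distance I would store, for each vertex of $\T$, its index along the chain, which again permits $O(1)$ distance computation. Both distances are monotone along~$\T$, so the key of every viewpoint in $H$ evolves monotonically as the sweep line moves, which is all the BST operations require.

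Next I would describe candidates for events of type (iii). For the Euclidean-along-terrain distance, monotonicity of arc length implies that the bisector of a pair $p_i,p_j$ intersects $\T$ at exactly one point, found by binary search on the arc-length array in $O(\log n)$ time (or in $O(1)$ by solving a single linear equation on the edge containing the midpoint of the arc from $p_i$ to $p_j$). For the link distance, the bisector intersection is either the unique vertex midway between $p_i$ and $p_j$ (odd case) or, handled as in the excerpt, an arbitrary interior point of the unique middle edge (even case); in either case it is computed in $O(1)$ time. Thus for every pair of viewpoints there is exactly one candidate event of type (iii), giving $O(m^2)$ such candidates overall; there is no analog of Lemma~\ref{lem:key} to invoke because the need for it disappears entirely.

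With these ingredients, I would run the algorithm of Fig.~\ref{alg:vorvis}: sort all $O(k_c + m^2)$ events along $\T$, maintain the visible viewpoints in $H$ keyed by the current distance to the sweep point, and process insertions, deletions, and bisector crossings exactly as before. Each event costs $O(\log m + \log n)$ time (the $\log n$ factor absorbs ray-shooting queries used for visibility tests and the possible binary search for the arc-length bisector). Together with the $O(n + (m^2 + k_c)\log n)$ cost of computing $\colvis(\T,\A)$ via~\cite{ter-vis2014}, the time bound matches the statement, and the space is dominated by storing $\T$, the event list, $H$, and the preprocessing tables, giving $O(n + m^2 + k_c)$. The main delicate point I expect is verifying that the degeneracies of the link distance (even-length pairs, and multiple points at zero distance on a shared edge) do not create cascades of simultaneous events that would inflate the complexity; my plan is to treat all events occurring at a common point of $\T$ in a single batch, as already prescribed in the general algorithm, so that each batch still incurs only a logarithmic cost per participating viewpoint pair.
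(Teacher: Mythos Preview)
Your proposal is correct and follows essentially the same approach as the paper: preprocess so that both alternative distances can be evaluated in $O(1)$ time, observe that for each pair $p_i,p_j$ the bisector meets $\T$ at a single point (so Lemma~\ref{lem:key} is unnecessary and there are only $O(m^2)$ type-(iii) candidates), and then rerun the sweep of Section~\ref{sec:VorVis} unchanged. Your treatment of the link-distance degeneracies via batching is exactly how the general algorithm already handles coincident events.
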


\section{Computation of $r^*$} \label{sec:minr}

We recall that $r^*$ is the minimum value of $r$ such that, if the viewpoints can only see objects that are within distance $r$, the visibility map of $\T$ does not change.

 Let $\A^{r}$ denote the set of viewpoints $\A$ with the restriction that the visibility range of the viewpoints is $r$. We then may define $\vis(\T,\A^{r}$), $\vorvis(\T,\A^{r})$\ldots in the natural way. Notice that, for $\A^{\infty}$, we obtain the same objects as in the standard case.
 
Let $d(x,y)$ denote the Euclidean distance between two points $x,y\in \mathbb{R}^2$. 
 	
\begin{lemma} \label{lem:minr}
$ r^*=\underset{i=1,\ldots, m}{\max}\{    \underset{x \in \vorviewshed{p_i}{\A^{\infty}}}{\sup} ~ d(p_i,x)\}$.
\end{lemma}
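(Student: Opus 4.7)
The plan is to prove the equality by showing both inequalities, using that the Voronoi viewshed of $p_i$ contains exactly those points for which $p_i$ is the closest viewpoint among those that see them. Let $R$ denote the right-hand side of the claimed formula.

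For the direction $r^* \le R$, I would show that restricting the visibility range to $R$ preserves the visibility map. Take any point $y\in \T$ that is visible in $\vis(\T,\A^{\infty})$. Then $y$ lies in $\vorviewshed{p_i}{\A^{\infty}}$ for some viewpoint $p_i$ (the one closest to $y$ among those that see it). By definition of $R$, we have $d(p_i,y)\le \sup_{x\in\vorviewshed{p_i}{\A^{\infty}}} d(p_i,x)\le R$, so $p_i$ still sees $y$ under the range restriction $R$. This shows $\vis(\T,\A^{\infty})\subseteq\vis(\T,\A^{R})$, and the reverse inclusion is trivial.

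For the direction $r^* \ge R$, I would argue by contradiction: suppose $r<R$ preserves the visibility map. Then there exists an index $i$ such that $\sup_{x\in\vorviewshed{p_i}{\A^{\infty}}} d(p_i,x) > r$, so we can pick a point $x^{*}\in \vorviewshed{p_i}{\A^{\infty}}$ with $d(p_i,x^{*})>r$ (if the supremum is not attained, choose $x^*$ in the viewshed with distance arbitrarily close to the sup, but still exceeding $r$; the Voronoi viewshed is a finite union of relatively open sub-arcs, so such a point exists). Since $x^{*}\in \vorviewshed{p_i}{\A^{\infty}}$, for every other viewpoint $p_j$ that sees $x^{*}$ we have $d(p_j,x^{*})\ge d(p_i,x^{*})>r$, and no viewpoint invisible from $x^{*}$ can contribute to its visibility either. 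Hence $x^{*}$ is not visible in $\vis(\T,\A^{r})$ while it is visible in $\vis(\T,\A^{\infty})$, a contradiction.

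The argument is quite short; the only mild subtlety to handle cleanly is whether the supremum over $\vorviewshed{p_i}{\A^{\infty}}$ is attained. Because the Voronoi viewshed consists of finitely many sub-arcs of $\T$ (at most $k_v$ in total) and $\T$ is a compact polygonal chain, the supremum over the closure of each sub-arc is attained; if the realizing point $x^{*}$ happens to be a boundary point shared with another Voronoi region, the same inequality $d(p_j,x^{*})\ge d(p_i,x^{*})$ still holds by continuity of the Euclidean distance, so the proof goes through. I expect no other obstacles.
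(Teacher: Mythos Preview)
Your proof is correct and follows essentially the same two-inequality strategy as the paper: for $r^*\le R$ you show every visible point is still seen by its nearest visible viewpoint under range $R$, and for $r^*\ge R$ you exhibit a point that becomes invisible under any smaller range. The paper phrases the first direction as ``the visible region equals the union of closures of the Voronoi viewsheds'' and the second via the point attaining the maximum, but the logical content is the same; your explicit handling of the sup-versus-max issue via continuity on the closure is, if anything, slightly more careful than the paper's footnote.
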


\begin{proof}	
Let $p_i$ and $x$ be a viewpoint and a point of $\T$ achieving the maximum in the right hand expression. If $r^*<d(p_i,x)$, $x$ would not be visible from $p_i$ in $\vis(\T,\A^{r^*})$. Since $x$ belongs to the boundary of $\vorviewshed{p_i}{\A^{\infty}}$, all other viewpoints seeing $x$ have a distance to $x$ that is greater than or equal to $d(p_i,x)$; thus, $x$ would also not be visible from any of them in $\vis(\T,\A^{r^*})$. Since $x$ is visible in $\vis(\T,\A^{\infty})$\footnote{It follows from our definition of visibility that the maximal visible portions of $\T$ are closed and, hence, the points on the boundary of the Voronoi viewsheds are visible.}, we reach a contradiction. Therefore, $r^*\geq d(p_i,x)$.

On the other hand, to keep $\vis(\T,\A^{\infty})$ unchanged, it is enough to maintain the closure of $\vorviewshed{p_i}{\A^{\infty}}$ visible for all $i$, since $\vis(\T,\A^{\infty})$ is equal to the union of the closures of the regions $\vorviewshed{p_i}{\A^{\infty}}$. If we set a visibility range of $\underset{x \in \vorviewshed{p_i}{\A^{\infty}}}{\sup} ~ d(p_i,x)$, the closure of $\vorviewshed{p_i}{\A^{\infty}}$ indeed remains visible. Consequently, $r^*\leq \underset{i=1,\ldots, m}{\max}\{    \underset{x \in \vorviewshed{p_i}{\A^{\infty}}}{\sup} ~ d(p_i,x)\}$.
\end{proof}  

Using this characterization of $r^*$, we can prove the following:

\begin{theorem}
The problem of computing the minimum value $r^*$ such that $\vis(\T,\A^{r^*}) = \vis(\T,\A^{\infty})$ can be solved in $O(n+(m^2+k_c) \log n)$ time.
\end{theorem}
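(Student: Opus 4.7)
The plan is to reduce the computation of $r^*$ to an easy post-processing step on top of $\vorvis(\T,\A)$. By Lemma~\ref{lem:minr}, $r^*$ is the maximum over all viewpoints $p_i$ of $\sup_{x\in \vorviewshed{p_i}{\A^{\infty}}} d(p_i,x)$, so it suffices to (a) compute $\vorvis(\T,\A)=\vorvis(\T,\A^{\infty})$ and (b) evaluate, for each region, the supremum of the Euclidean distance from the associated viewpoint.

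First I would run the algorithm of Section~\ref{sec:VorVis} to obtain $\vorvis(\T,\A)$ in $O(n+(m^2+k_c)\log n)$ time, represented as a list of pairs $((q,r),p_i)$ sorted from left to right along $\T$, together with the vertices of $\T$ interleaved in the list. This partitions $\T$ into $O(n+k_v)$ intervals such that each interval lies on a single edge of $\T$ and is fully contained in a single region $\vorviewshed{p_i}{\A^{\infty}}$ (or has no associated visible viewpoint, in which case it is ignored).

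The key observation is that on each such interval the function $x\mapsto d(p_i,x)$ is the restriction of the Euclidean norm (translated by $-p_i$) to a line segment, hence convex. Its supremum over the interval is therefore attained at one of the two endpoints. I would thus walk through the list of intervals, evaluate $d(p_i,\cdot)$ at both endpoints of each interval (spending $O(1)$ per interval), and keep the running maximum. Endpoints shared between adjacent intervals are evaluated on both sides, so the supremum of $d(p_i,\cdot)$ over $\vorviewshed{p_i}{\A^{\infty}}$ is captured even though the region is open; by Lemma~\ref{lem:minr} the global maximum of these values equals $r^*$. The post-processing therefore takes $O(n+k_v)$ time, which by Theorem~\ref{thm:complexity} is $O(n+k_c+m)$ and is absorbed by the cost of computing $\vorvis(\T,\A)$, yielding the claimed $O(n+(m^2+k_c)\log n)$ bound.

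The only subtle point, and what I expect to be the main thing to argue carefully, is that sweeping along the intervals of $\vorvis(\T,\A)$ and checking only the two endpoints per interval genuinely produces the supremum appearing in Lemma~\ref{lem:minr}. This reduces to two facts: convexity of Euclidean distance along a single edge (so the per-interval supremum is at an endpoint), and the fact that the boundary points of a Voronoi region either lie at an event point of $\vorvis(\T,\A)$ or at a vertex of $\T$, both of which are explicitly endpoints of some interval in the list. Everything else is bookkeeping.
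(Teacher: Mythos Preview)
Your proposal is correct and follows essentially the same approach as the paper: compute $\vorvis(\T,\A^{\infty})$ within the stated time bound and then perform a linear scan over its vertices (which, by the paper's convention, include the terrain vertices) taking the maximum distance to the associated viewpoint. Your explicit convexity argument showing that the supremum on each single-edge interval is attained at an endpoint is a useful clarification that the paper leaves implicit.
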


\begin{proof} 
By Lemma~\ref{lem:minr}, it suffices to consider the distances between the vertices of $\vorvis(\T,\A^{\infty})$ (that is, the points on the boundary of the Voronoi viewsheds) and their associated viewpoints. Consequently, the problem can be trivially solved in linear time if $\vorvis(\T,\A^{\infty})$ is known.
\end{proof} 

\section{Final remark}

As indicated in~\cite{ter-vis2014}, in the running time of the algorithm to compute $\colvis(\T,\A)$, the term $m^2\log n$ disappears if we assume that no two viewpoints change from invisible to visible at the same point of $\T$. This can always be achieved by infinitesimally perturbing the terrain. However, such a perturbation does not make the same term disappear from the running time of the presented algorithm to compute $\vorvis(\T,\A)$. Given that one of the bounds in Theorem~\ref{thm:complexity} guarantees that $k_v=O(k_c+m)$, it remains as an open problem to design an algorithm for $\vorvis(\T,\A)$ that is equally faster than that for $\colvis(\T,\A)$ for all possible instances.




\bibliographystyle{elsarticle-harv}
\bibliography{refs}

\begin{thebibliography}{17}
\expandafter\ifx\csname natexlab\endcsname\relax\def\natexlab#1{#1}\fi
\providecommand{\url}[1]{\texttt{#1}}
\providecommand{\href}[2]{#2}
\providecommand{\path}[1]{#1}
\providecommand{\DOIprefix}{doi:}
\providecommand{\ArXivprefix}{arXiv:}
\providecommand{\URLprefix}{URL: }
\providecommand{\Pubmedprefix}{pmid:}
\providecommand{\doi}[1]{\href{http://dx.doi.org/#1}{\path{#1}}}
\providecommand{\Pubmed}[1]{\href{pmid:#1}{\path{#1}}}
\providecommand{\bibinfo}[2]{#2}
\ifx\xfnm\relax \def\xfnm[#1]{\unskip,\space#1}\fi
\bibitem[{Afshani et~al.(2018)Afshani, de~Berg, Casanova, Karsin, Lambrechts,
  Sitchinava and Tsirogiannis}]{afshani2018efficient}
\bibinfo{author}{Afshani, P.}, \bibinfo{author}{de~Berg, M.},
  \bibinfo{author}{Casanova, H.}, \bibinfo{author}{Karsin, B.},
  \bibinfo{author}{Lambrechts, C.}, \bibinfo{author}{Sitchinava, N.},
  \bibinfo{author}{Tsirogiannis, C.}, \bibinfo{year}{2018}.
\newblock \bibinfo{title}{An efficient algorithm for the {1D} total
  visibility-index problem and its parallelization}.
\newblock \bibinfo{journal}{Journal of Experimental Algorithmics}
  \bibinfo{volume}{23}, \bibinfo{pages}{1--23}.
\bibitem[{Ben-Moshe et~al.(2004)Ben-Moshe, Hall-Holt, Katz and
  Mitchell}]{ben2004computing}
\bibinfo{author}{Ben-Moshe, B.}, \bibinfo{author}{Hall-Holt, O.},
  \bibinfo{author}{Katz, M.J.}, \bibinfo{author}{Mitchell, J.S.},
  \bibinfo{year}{2004}.
\newblock \bibinfo{title}{Computing the visibility graph of points within a
  polygon}, in: \bibinfo{booktitle}{Proceedings of the 20th Annual Symposium on
  Computational Geometry}, pp. \bibinfo{pages}{27--35}.
\bibitem[{Ben-Moshe et~al.(2007)Ben-Moshe, Katz and Mitchell}]{ben2007constant}
\bibinfo{author}{Ben-Moshe, B.}, \bibinfo{author}{Katz, M.J.},
  \bibinfo{author}{Mitchell, J.S.}, \bibinfo{year}{2007}.
\newblock \bibinfo{title}{A constant-factor approximation algorithm for optimal
  1.5 d terrain guarding}.
\newblock \bibinfo{journal}{SIAM Journal on Computing} \bibinfo{volume}{36},
  \bibinfo{pages}{1631--1647}.
\bibitem[{Catry et~al.(2007)Catry, Rego, Santos, Almeida and
  Relvas}]{crsar-fpp-07}
\bibinfo{author}{Catry, F.X.}, \bibinfo{author}{Rego, F.C.},
  \bibinfo{author}{Santos, T.}, \bibinfo{author}{Almeida, J.},
  \bibinfo{author}{Relvas, P.}, \bibinfo{year}{2007}.
\newblock \bibinfo{title}{Forest fires prevention in {Portugal} - using {GIS}
  to help improving early fire detection effectiveness}, in:
  \bibinfo{booktitle}{Proceedings of the 4th International Wildland Fire
  Conference}.
\bibitem[{Chazelle et~al.(1994)Chazelle, Edelsbrunner, Grigni, Guibas,
  Hershberger, Sharir and Snoeyink}]{cegghss-rspugt-94}
\bibinfo{author}{Chazelle, B.}, \bibinfo{author}{Edelsbrunner, H.},
  \bibinfo{author}{Grigni, M.}, \bibinfo{author}{Guibas, L.J.},
  \bibinfo{author}{Hershberger, J.}, \bibinfo{author}{Sharir, M.},
  \bibinfo{author}{Snoeyink, J.}, \bibinfo{year}{1994}.
\newblock \bibinfo{title}{Ray shooting in polygons using geodesic
  triangulations}.
\newblock \bibinfo{journal}{Algorithmica} \bibinfo{volume}{12},
  \bibinfo{pages}{54--68}.
\bibitem[{Daescu et~al.(2019)Daescu, Friedrichs, Malik, Polishchuk and
  Schmidt}]{daescu2019altitude}
\bibinfo{author}{Daescu, O.}, \bibinfo{author}{Friedrichs, S.},
  \bibinfo{author}{Malik, H.}, \bibinfo{author}{Polishchuk, V.},
  \bibinfo{author}{Schmidt, C.}, \bibinfo{year}{2019}.
\newblock \bibinfo{title}{Altitude terrain guarding and guarding uni-monotone
  polygons}.
\newblock \bibinfo{journal}{Computational Geometry: Theory and Applications}
  \bibinfo{volume}{84}, \bibinfo{pages}{22--35}.
\bibitem[{Eisenbrand et~al.(2008)Eisenbrand, Funke, Karrenbauer and
  Matijevic}]{eisenbrand2008energy}
\bibinfo{author}{Eisenbrand, F.}, \bibinfo{author}{Funke, S.},
  \bibinfo{author}{Karrenbauer, A.}, \bibinfo{author}{Matijevic, D.},
  \bibinfo{year}{2008}.
\newblock \bibinfo{title}{Energy-aware stage illumination}.
\newblock \bibinfo{journal}{International Journal of Computational Geometry \&
  Applications} \bibinfo{volume}{18}, \bibinfo{pages}{107--129}.
\bibitem[{Ernestus et~al.(2017)Ernestus, Friedrichs, Hemmer, Kokem{\"u}ller,
  Kr{\"o}ller, Moeini and Schmidt}]{ernestus2017algorithms}
\bibinfo{author}{Ernestus, M.}, \bibinfo{author}{Friedrichs, S.},
  \bibinfo{author}{Hemmer, M.}, \bibinfo{author}{Kokem{\"u}ller, J.},
  \bibinfo{author}{Kr{\"o}ller, A.}, \bibinfo{author}{Moeini, M.},
  \bibinfo{author}{Schmidt, C.}, \bibinfo{year}{2017}.
\newblock \bibinfo{title}{Algorithms for art gallery illumination}.
\newblock \bibinfo{journal}{Journal of Global Optimization}
  \bibinfo{volume}{68}, \bibinfo{pages}{23--45}.
\bibitem[{Friedrichs et~al.(2016)Friedrichs, Hemmer, King and
  Schmidt}]{friedrichs2015continuous}
\bibinfo{author}{Friedrichs, S.}, \bibinfo{author}{Hemmer, M.},
  \bibinfo{author}{King, J.}, \bibinfo{author}{Schmidt, C.},
  \bibinfo{year}{2016}.
\newblock \bibinfo{title}{The continuous 1.5{D} terrain guarding problem:
  Discretization, optimal solutions, and {PTAS}}.
\newblock \bibinfo{journal}{Journal of Computational Geometry}
  \bibinfo{volume}{7}, \bibinfo{pages}{256--284}.
\bibitem[{Friedrichs et~al.(2014)Friedrichs, Hemmer and
  Schmidt}]{friedrichs2014ptas}
\bibinfo{author}{Friedrichs, S.}, \bibinfo{author}{Hemmer, M.},
  \bibinfo{author}{Schmidt, C.}, \bibinfo{year}{2014}.
\newblock \bibinfo{title}{A {PTAS} for the continuous 1.5{D} terrain guarding
  problem}, in: \bibinfo{booktitle}{Proceedings of the 26th Canadian Conference
  on Computational Geometry}.
\bibitem[{Gibson et~al.(2009)Gibson, Kanade, Krohn and
  Varadarajan}]{gibson2009approximation}
\bibinfo{author}{Gibson, M.}, \bibinfo{author}{Kanade, G.},
  \bibinfo{author}{Krohn, E.}, \bibinfo{author}{Varadarajan, K.},
  \bibinfo{year}{2009}.
\newblock \bibinfo{title}{An approximation scheme for terrain guarding}, in:
  \bibinfo{booktitle}{Approximation, Randomization, and Combinatorial
  Optimization. Algorithms and Techniques}. \bibinfo{publisher}{Springer}, pp.
  \bibinfo{pages}{140--148}.
\bibitem[{Hurtado et~al.(2014)Hurtado, L\"offler, Matos, Sacrist\'an, Saumell,
  Silveira and Staals}]{ter-vis2014}
\bibinfo{author}{Hurtado, F.}, \bibinfo{author}{L\"offler, M.},
  \bibinfo{author}{Matos, I.}, \bibinfo{author}{Sacrist\'an, V.},
  \bibinfo{author}{Saumell, M.}, \bibinfo{author}{Silveira, R.I.},
  \bibinfo{author}{Staals, F.}, \bibinfo{year}{2014}.
\newblock \bibinfo{title}{Terrain visibility with multiple viewpoints}.
\newblock \bibinfo{journal}{International Journal of Computational Geometry \&
  Applications} \bibinfo{volume}{24}, \bibinfo{pages}{275--306}.
\bibitem[{Joe and Simpson(1987)}]{joe1987corrections}
\bibinfo{author}{Joe, B.}, \bibinfo{author}{Simpson, R.B.},
  \bibinfo{year}{1987}.
\newblock \bibinfo{title}{Corrections to {L}ee's visibility polygon algorithm}.
\newblock \bibinfo{journal}{BIT Numerical Mathematics} \bibinfo{volume}{27},
  \bibinfo{pages}{458--473}.
\bibitem[{King and Krohn(2011)}]{king2011terrain}
\bibinfo{author}{King, J.}, \bibinfo{author}{Krohn, E.}, \bibinfo{year}{2011}.
\newblock \bibinfo{title}{Terrain guarding is {NP}-hard}.
\newblock \bibinfo{journal}{SIAM Journal on Computing} \bibinfo{volume}{40},
  \bibinfo{pages}{1316--1339}.
\bibitem[{M\"oller(2006)}]{m-cwpl-06}
\bibinfo{author}{M\"oller, B.}, \bibinfo{year}{2006}.
\newblock \bibinfo{title}{Changing wind-power landscapes: regional assessment
  of visual impact on land use and population in {N}orthern {J}utland,
  {D}enmark}.
\newblock \bibinfo{journal}{Applied Energy} \bibinfo{volume}{83},
  \bibinfo{pages}{477--494}.
\newblock \DOIprefix\doi{10.1016/j.apenergy.2005.04.004}.
\bibitem[{O'Rourke(2006)}]{ORourke}
\bibinfo{author}{O'Rourke, J.}, \bibinfo{year}{2006}.
\newblock \bibinfo{title}{Open problems from {CCCG} 2005.}, in:
  \bibinfo{booktitle}{Proceedings of the 18th Canadian Conference on
  Computational Geometry}.
\bibitem[{Yick et~al.(2008)Yick, Mukherjee and Ghosal}]{ymg-wsns-08}
\bibinfo{author}{Yick, J.}, \bibinfo{author}{Mukherjee, B.},
  \bibinfo{author}{Ghosal, D.}, \bibinfo{year}{2008}.
\newblock \bibinfo{title}{Wireless sensor network survey}.
\newblock \bibinfo{journal}{Computer Networks} \bibinfo{volume}{52},
  \bibinfo{pages}{2292--2330}.
\newblock \DOIprefix\doi{10.1016/j.comnet.2008.04.002}.

\end{thebibliography}

\end{document}